\newcommand{\newc}{\newcommand}
\newc{\beq}{\begin{equation}}
\newc{\eeq}{\end{equation}}
\newc{\bea}{\begin{array}}
\newc{\eea}{\end{array}}
\newcommand{\ben}{\begin{eqnarray}}
\newcommand{\een}{\end{eqnarray}}
\newc{\ra}{\rightarrow}
\newc{\bfx}{{\bf x}}
\newc{\bfV}{{\bf V}}
\newc{\cO}{{\cal O}}
\newc{\bfv}{{\bf v}}
\newc{\bfu}{{\bf u}}
\newc{\bfp}{{\bf p}}
\newc{\ve}{{\varepsilon}}
\newc{\Psibar}{\overline\Psi}
\newc{\w}{{\bf w}}
\newc{\E}{{\mathbf{E}}}
\newc{\EE}{{\mathcal E}}
\newc{\bfn}{{\mathbf\nabla}}
\newc{\la}{{\cal L}}
\newc{\tla}{{\tilde{\cal L}}}
\newc{\bp}{{\bf p}}
\newc{\ho}{\hookrightarrow }
\newc{\bP}{{\bf P}}
\newc{\pd}{{\partial}}
\newc{\piv}{{\partial_4}}
\newc{\pv}{{\partial_5}}
\newc{\bJ}{{\bf J}}
\newc{\bze}{{\mathbf 0}}
\newc{\bK}{{\bf K}}
\newc{\tphi}{{\tilde\phi}}
\newc{\tF}{{\tilde F}}
\newc{\tD}{{\tilde D}}
\newc{\tJ}{{\tilde J}}
\newc{\tj}{{\tilde j}}
\newc{\bD}{{\bf D}}
\newc{\tvphi}{{\tilde\varphi}}
\newc{\trho}{{\tilde\rho}}
\newc{\ttheta}{{\tilde\theta}}
\newc{\tpsi}{{\tilde\psi}}
\newc{\tu}{{\tilde u}}
\newc{\cD}{{\cal D}}
\newc{\tPhi}{{\tilde\Phi}}
\newc{\tPsi}{{\tilde\Psi}}
\newc{\tA}{{\tilde A}}
\newc{\talpha}{{\tilde\alpha}}
\newc{\tbeta}{{\tilde\beta}}
\newc{\bA}{{\mathbf A}}
\newc{\bB}{{\bf B}}
\newc{\br}{{\bf r}}
\newc{\sig}{{\mathbf\sigma}}
\newc{\eg}{{\rm e.g.\ }}
\newc{\ie}{{\rm i.e.\ }}
\newcommand{\pslash}{\not{\hbox{\kern-2.3pt $p$}}}
\newcommand{\pdslash}{\not{\hbox{\kern-2pt $\partial$}}}
\newtheorem{theorem}{Theorem}[section]
\newtheorem{lemma}{Lemma}[section]
\newtheorem{definition}{Definition}[section]
\newenvironment{proof}[1][Proof]{\noindent\textbf{#1.} }{\ \rule{0.5em}{0.5em}}
\begin{document}

\title[]{Quasitriangular Hopf Algebras, Braid Groups and Quantum Entanglement}%

\author{Eric Pinto}%
\affiliation{Instituto de  F\'isica, Universidade Federal da Bahia,
  Campus Ondina, 40210-340, Salvador, Bahia, Brazil.}

\author{Marco A. S. Trindade}%
\affiliation{Departamento de Ci\^{e}ncias Exatas e da Terra, Universidade do Estado da Bahia, Rodovia Alagoinhas/Salvador, BR 110, Km 03, 48040-210 - Alagoinhas, Bahia, Brazil.}

\author{J. D. M. Vianna}%
\affiliation{Instituto de F\'isica, Universidade Federal da Bahia, Campus Ondina, 40210-340, Salvador, Bahia, Brazil.}
\affiliation{Instituto de F\'isica, Universidade de Bras\'ilia, 70910-900, Bras\'ilia, DF, Brazil.}

\keywords{Quantum groups; R-matrices; entanglement; cyclic groups.}

\begin{abstract}
The aim of the paper is to provide a method to obtain representations of the braid group through a set of quasitriangular Hopf algebras. In particular, these algebras may be derived from group algebras of cyclic groups with additional algebraic structures. In this context, by using the flip operator, it is possible to construct R-matrices that can be regarded as quantum logic gates capable of preserving quantum entanglement.
\end{abstract}

\maketitle

\section{Introduction}
The discovery of quantum entanglement has its origins in the seminal article by Einstein, Podolsky and Rosen in 1935 \cite{Epr}. In this work was proposed a thought experiment that attempted to show that quantum mechanical theory was incomplete. Currently, the quantum entanglement plays key role in the quantum information and quantum computation theory \cite{Nielsen} and has been widely exploited in quantum teleportation \cite{tel}, quantum algorithms \cite{alg} and quantum cryptography \cite{Bennet1,Ekert}. An interesting proposal for quantum computing is the topological quantum computation that employs two-dimensional (2D) quasiparticles called anyons \cite{Fred}, whose world lines cross over one another to form braids in a three-dimensional (3D) spacetime. These braids form the logic gates that make up the quantum computer. One advantage of this proposal is the fact that it allows a fault-tolerant computing. Small perturbations can cause decoherence and introduce errors in computing, however these small perturbations do not change the topological properties of braids. Experimental evidences of non-Abelian anyons appear in quantum Hall systems in 2D electron gases subject to high magnetic fields \cite{Fred}.

From a mathematical perspective, the anyons are described by representations of the braid groups, an algebraic structure that was explicitly introduced by Artin in 1925 \cite{Artin}. In algebraic topology and knot theory, they can be recognized as the fundamental group of a configuration space, by using the homotopy concept \cite{Kassel1}. A bridge between knot theory and quantum information can be found in the references\cite{Kauffman1,Kauffman2,Kauffman3,Kauffman4,Sun,Zhang1}. In particular, in reference \cite{Zhang2}, the quantum teleportation has been described by the group of braids and Temperley-Lieb algebra, providing diagrammatic representations for teleportation. In this line, no explicit link is established with the anyons, with an general algebraic-topological perspective. An approach for non-Abelian anyons through quasitriangular Hopf algebras \cite{Drinfeld} was performed by Kitaev \cite{Kitaev1,Kitaev2}. The quasitriangular structure \cite{Kassel2,Majid1} provides a unified description of the braiding properties, by using the Yang-Baxter equations. The final piece is the universal R-matrix that can be used to define representations of the braid group on fusion spaces, also called topological Hilbert spaces.

Hopf algebras \cite{Hopf,Abe,Pressley,Majid2} appear naturally in algebraic topology, where  are related to the H-space concept. Its origin is in the  axiomatizations of the works of Hopf on topological properties of Lie groups. The notion of quasitriangular Hopf algebras, or quantum groups, in its turn, is due to Drinfeld  \cite{Drinfeld} as an abstraction of structures implicit in the studies of Sklyanin\cite{sklyanin1,sklyanin2}, Jimbo\cite{jimbo} and others\cite{Majid2}. There are many applications of these structures in physics, especially related to quantum gravity \cite{Pressley,Majid2}. A relationship between quantum groups and quantum entanglement can be found in references \cite{trindade,Korbicz}. Trindade and Vianna \cite{trindade} performed a possible connection between quantum groups, non-extensive statistical mechanics and entanglement through the entropic parameter q. Korbicz \textit{et al} \cite{Korbicz} addressed the problem of separability in terms of compact quantum groups, resulting in an analog of positivity of partial transpose criterion in quantum information theory. In reference \cite{Kassel2} was shown as quasitriangular Hopf algebras can generate R-matrices. This result is particularly interesting because it allows  to obtain representations of braid groups, since we have a quasitriangular structure.

In this work, we developed a general method to obtain representations of the braid groups from a set of quasitriangular Hopf algebras. We applied these results to Hopf algebras derived of cyclic group. In particular, we investigated our general results for the $CZ_{/2}$ group and obtained a quantum gate leading entangled states in themselves. We performed a comparative analysis with other work, enphasizing the differences, advantages and necessity of symmetry considerations.

The paper is organized as follows: Section 2 presents some basic concepts about quasitriangular Hopf algebras and braid groups. Section 3 contains a general method for obtaining the representations of braid groups. In Sec. 4, we derive a quantum logic gate that turns Bell states into Bell states. Section 5 is devoted to concluding remarks and outlooks.

\section{Basic Concepts}

\label{sec:construction}
In this section, we are going to review some basic concepts \cite{Majid1} that we shall need later.

\begin{definition}
Let $(H, \mu, \eta, \Delta, \varepsilon)$ be a bialgebra. We call it quasi-cocomutative if there exists an invertible element
$R$ of the algebra ${H \otimes H}$ such that for all $x$ $\in$ $H$ we have

\begin{equation}
\Delta^{op}(x)=R \Delta(x) R^{-1}
\end{equation}
\end{definition}

Here $\Delta^{op}=\tau_{H,H}\circ \Delta$ denotes the opposite coproduct on $H$, $\mu$ and $\eta$ are linear maps that express the multiplication and unit, respectively; $\Delta$ is a product and $\varepsilon$ is counity. An element $R$ satisfying this condition is called a universal R-matrix.

\begin{definition}
A quasi-cocomutative Hopf algebra $(H, \mu, \eta, \Delta, \varepsilon, S, S^{-1}, R)$ is quasitriangular if the universal
R-matrix R satisfies the two relations
\begin{equation}
(\Delta \otimes id_{H})(R)=R_{13}R_{23}
\end{equation}
and
\begin{equation}
(id_{H} \otimes \Delta)(R)=R_{13}R_{12}
\end{equation}
\end{definition}
by using Sweedler's notation \cite{Majid2} for $R_{ij}$.\\

It is possible to shown that universal R-matrix R satisfies the equation
\begin{equation}
 R_{12}R_{13}R_{23}=R_{23}R_{13}R_{12}
\end{equation}

denoted by algebraic Yang-Baxter equation. \\

\begin{definition}
Let V be a vector space over a field k. A linear automorphism $R'$ of $V \otimes V$ is said to be an R-matrix of it is a solution
of the Yang-Baxter equation
\begin{equation}
(R' \otimes id_{V})(id_{V} \otimes R')(R' \otimes id_{V})=(id_{V} \otimes R')(R' \otimes id_{V})(id_{V} \otimes R')
\end{equation}
\end{definition}
that holds in the automorphism group of $V \otimes V$.

The later relation has fundamental importance because by identification
\begin{equation}
R'_{i}=\mathbb{I}^{\otimes (i-1)} \otimes R' \otimes \mathbb{I}^{\otimes (N-i)}
\end{equation}
where $\mathbb{I}$ is on identity operator and allows to build representations of braid groups with N strands that have to satisfy
\begin{eqnarray}
R'_{i}R'_{j}=R'_{j}R'_{i},\ \ \ |i-j|\geq2\
\end{eqnarray}
and
\begin{eqnarray}
R'_{i}R'_{i+1}R'_{i}=R'_{i+1}R'_{i}R'_{i+1},\ \ \ i=1,\ldots,N-2\
\end{eqnarray}

\section{Results}

We now perform generalization about some results in Ref.\cite{Kassel2} and we explore these expressions in the context of cyclic groups.
\begin{lemma}
Let $(H_{1}, \mu_{1}, \eta_{1}, \Delta_{1}, \varepsilon_{1}, S_{1}, S^{-1}_{1}, R_{1})$,\ldots, $(H_{n},\mu_{n}, \eta_{n}, \Delta_{n},\varepsilon_{n},S_{n},S^{-1}_{n},\\ R_{n})$ quasitriangular Hopf algebras. Hence there is an invertible element R such that for all $x \in H_{1} \otimes \cdots \otimes H_{n}$, we have
\begin{equation}
\Delta^{op}(x)R=R \Delta(x)
\end{equation}

with $R_{1}=\sum_{i_{1}}s_{i_{1}} \otimes t_{i_{1}}$,\ldots, $R_{n}=\sum_{i_{n}}s_{i_{n}} \otimes t_{i_{n}}$, and
\begin{equation}
R=\sum_{i_{n},\ldots, i_{n}}s_{i_{1}}\otimes \cdots \otimes s_{i_{n}} \otimes t_{i_{1}} \otimes \cdots \otimes t_{i_{n}}
\end{equation}

Moreover the following relations
\begin{equation} \label{eq:relation2}
(\Delta \otimes id_{H_{1} \otimes \cdots \otimes H_{n}})(R)=R_{13}R_{23}
\end{equation}

\begin{equation} \label{eq:relation3}
(id_{H_{1} \otimes \cdots \otimes H_{n}} \otimes \Delta)(R)=R_{13}R_{12}
\end{equation}

\begin{equation}
R_{12}R_{13}R_{23}=R_{23}R_{13}R_{12}
\end{equation}

with $R_{12}=\sum_{i_{1} \ldots i_{n}}s_{i_{1}} \otimes \ldots s_{i_{n}} \otimes t_{i_{1}} \otimes \cdots \otimes t_{i_{n}} \otimes 1 \otimes \cdots \otimes 1$, $R_{13}=\sum_{i_{1} \ldots i_{n}}s_{i_{1}} \otimes \cdots s_{i_{n}} \otimes 1 \otimes \cdots \otimes 1 \otimes t_{i_{1}} \otimes \cdots \otimes t_{i_{n}}$ and $R_{23}=\sum_{i_{1} \ldots i_{n}} 1 \otimes \cdots \otimes 1 \otimes s_{i_{1}} \otimes \cdots \otimes s_{i_{n}} \otimes t_{i_{1}} \otimes \cdots \otimes t_{i_{n}}$ are satisfied.
\end{lemma}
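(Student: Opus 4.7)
The plan is to exploit the fact that the tensor product Hopf algebra $H := H_1\otimes\cdots\otimes H_n$ has all its structure maps acting componentwise, so that the individual R-matrices $R_k$ live in disjoint pairs of slots of $H\otimes H$ and mutually commute. The first step is to rewrite the claimed R-matrix as a product
\begin{equation}
R \;=\; \prod_{k=1}^{n}\,(R_k)_{\,k,\,n+k},
\end{equation}
where $(R_k)_{k,n+k}$ denotes $R_k$ inserted in slots $k$ and $n+k$ of $(H_1\otimes\cdots\otimes H_n)^{\otimes 2}$ with units elsewhere. Because each $R_k$ occupies a disjoint pair of slots, the factors commute, and expanding recovers the interleaved sum displayed in the statement. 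Invertibility is immediate since each $R_k$ is invertible.

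For the quasi-cocommutativity relation, by linearity I may check it on a simple tensor $x=x_1\otimes\cdots\otimes x_n$. Using the componentwise coproduct one gets, with standard Sweedler indexing in each factor,
\begin{equation}
\Delta(x) \;=\; \sum x_{1,(1)}\otimes\cdots\otimes x_{n,(1)}\otimes x_{1,(2)}\otimes\cdots\otimes x_{n,(2)},
\end{equation}
and similarly for $\Delta^{op}(x)$ with the two groups swapped. Both sides of the desired identity $\Delta^{op}(x)R=R\Delta(x)$ factor into a product over $k$ of terms supported in slots $\{k,n+k\}$, and the $k$-th such term is exactly the quasi-cocommutativity identity $R_k\Delta_k(x_k)=\Delta_k^{op}(x_k)R_k$ for the algebra $H_k$; multiplying over $k$ (the factors commute) gives the result.

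For the two coproduct compatibilities (\ref{eq:relation2}) and (\ref{eq:relation3}), I would apply $\Delta\otimes id_H$ to the product form of $R$; since $\Delta$ is componentwise, this acts on the $k$-th slot of each $R_k$ by $\Delta_k$. The axiom $(\Delta_k\otimes id_{H_k})(R_k)=(R_k)_{13}(R_k)_{23}$ applied in slots $\{k,n+k\}$ (now expanded into three $H$-slots) gives, after a purely notational re-identification of the "big" slot positions with the underlying $H_k$-positions, the relation $(\Delta\otimes id_H)(R)=R_{13}R_{23}$; the other relation is symmetric. The Yang–Baxter equation then follows by the standard argument, which in the single-algebra case derives (4) from the two coproduct axioms together with quasi-cocommutativity, applied now to $H$ in place of $H_k$.

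The only real obstacle is bookkeeping: keeping track of how a slot of $H^{\otimes m}$ decomposes into $n$ slots of the $H_k$'s, and how the subscript notation $R_{ij}$ used for $R\in H\otimes H$ translates into the $2n$-labelled slot picture used for $(R_k)_{k,n+k}$. Once a consistent labeling convention is fixed, every verification reduces transparently to the corresponding axiom for a single $(H_k,R_k)$ combined with the trivial observation that operations in disjoint slot positions commute. An alternative, essentially equivalent, route is induction on $n$: the case $n=1$ is the hypothesis, and the inductive step amounts to the $n=2$ construction applied to $H_1\otimes\cdots\otimes H_{n-1}$ and $H_n$.
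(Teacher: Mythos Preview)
Your proposal is correct and follows essentially the same approach as the paper: both arguments factor the tensor Hopf algebra structure componentwise, observe that the pieces $(R_k)_{k,n+k}$ sit in disjoint slots and hence commute, and then reduce each identity to the corresponding axiom for the individual $(H_k,R_k)$. The only minor difference is that for the Yang--Baxter relation the paper carries out the same slotwise factorization explicitly, whereas you invoke the standard fact that it is a consequence of the three quasitriangularity axioms already established for $H$; both routes are valid and equally short.
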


\begin{proof}
Let the coproduct:
\begin{eqnarray}
\Delta(x)=\sum_{(x)}x' \otimes x'' \nonumber
\end{eqnarray}
in Sweedler's notation
\begin{eqnarray}
\Delta(x_{1} \otimes \cdots \otimes x_{n})&=&\sum_{(x_{1} \otimes \cdots \otimes x_{n})} (x_{1} \otimes x_{2} \otimes \cdots \otimes x_{n})' \otimes (x_{1} \otimes x_{2} \otimes \cdots \otimes x_{n})'' \nonumber \\
&=&\sum_{(x_{1} \otimes \cdots \otimes x_{n})} x'_{1} \otimes \cdots \otimes x'_{n} \otimes x''_{1} \otimes \cdots \otimes x''_{n} \nonumber
\end{eqnarray}
Then, in a general case, we have
\begin{eqnarray*}
\Delta^{op}(x_{1} \otimes \cdots \otimes x_{n})=\sum_{(x_{1} \otimes \cdots \otimes x_{n})} x''_{1} \otimes \cdots \otimes x''_{n} \otimes x'_{1} \otimes \cdots \otimes x'_{n}
\end{eqnarray*}
Consequently
\begin{eqnarray*}
\Delta^{op}(x_{1} \otimes \cdots \otimes x_{n})R&=&\sum_{(x_{1} \otimes \cdots \otimes x_{n})} (x''_{1} \otimes \cdots \otimes x''_{n} \otimes x'_{1} \otimes \cdots \otimes x'_{n})\sum_{i_{1} ... i_{n}} s_{i_{1}} \otimes \cdots \otimes s_{i_{n}} \otimes t_{i_{1}}\otimes \cdots \otimes t_{i_{n}} \nonumber \\
&=&\sum_{(x_{1},\ldots,x_{n};i_{1},\ldots,i_{n})}x''_{1}s_{i_{1}} \otimes x''_{2}s_{i_{2}} \otimes \cdots \otimes x''_{n}s_{i_{n}} \otimes x'_{1}t_{i_{1}} \otimes \cdots \otimes x'_{n}t_{i_{n}} \nonumber \\
&=&\left(\sum_{(x_{1},\ldots,x_{n};i_{1},\ldots,i_{n})}x''_{1}s_{i_{1}} \otimes 1 \otimes \cdots \otimes x'_{1}t_{i_{1}} \otimes \cdots \otimes 1\right) \ldots \nonumber \\
&& \left(\sum_{(x_{1},\ldots,x_{n};i_{1},\ldots,i_{n})}1 \otimes \cdots \otimes x''_{i_{n}}s_{i_{n}}
\otimes 1 \otimes \cdots \otimes x'_{n}t_{i_{n}}\right) \nonumber \\
&=&\left(\sum_{(x_{1},\ldots,x_{n};i_{1},\ldots,i_{n})}s_{i_{1}}x'_{1} \otimes 1 \otimes \cdots \otimes t_{i_{1}x''_{1}}\otimes \cdots \otimes 1\right) \ldots \nonumber \\
&&\left(\sum_{(x_{1},\ldots,x_{n};i_{1},\ldots,i_{n})}1 \otimes \cdots \otimes s_{i_{n}}x'_{i_{n}} \otimes 1 \otimes \cdots \otimes t_{i_{n}}x''_{n}\right) \nonumber \\
&=&\left(\sum_{(x_{1},\ldots,x_{n};i_{1},\ldots,i_{n})}s_{i_{1}}x'_{1} \otimes s_{i_{2}}x'_{2} \otimes \cdots \otimes s_{i_{n}}x'_{n} \otimes t_{i_{1}}x''_{1} \otimes \cdots \otimes t_{i_{n}}x''_{n}\right) \nonumber \\
&=&R\Delta(x) \nonumber
\end{eqnarray*}
For the relations (\ref{eq:relation2}) and (\ref{eq:relation3})
\begin{eqnarray}
(\Delta \otimes id_{H})\left(\sum_{i_{1} \ldots i_{n}} s_{i_{1}} \otimes \cdots \otimes s_{i_{n}} \otimes t_{i_{1}} \otimes \cdots \otimes t_{i_{n}}\right)&=&\sum_{i_{1} \ldots i_{n}}\Delta(s_{i_{1}} \otimes \cdots \otimes s_{i_{n}})\otimes  id_{H}(t_{i_{1}} \otimes \cdots \otimes t_{i_{n}}) \nonumber \\
&=&\sum_{i_{1} \ldots i_{n}}s'_{i_{1}} \otimes s'_{i_{2}} \otimes \cdots s'_{i_{n}}\otimes s''_{i_{1}} \otimes \cdots \otimes s''_{i_{n}} \nonumber \\
&&\otimes t_{i_{1}} \otimes \cdots \otimes t_{i_{n}} \nonumber \\
&=&(\sum_{i_{1}s_{1}} s'_{i_{1}} \otimes 1 \otimes \cdots \otimes s''_{i_{1}} \otimes \cdots \otimes t'_{i_{1}} \nonumber \\
&&\otimes \cdots \otimes 1) \ldots (\sum_{i_{n}s_{n}} 1 \otimes \cdots \otimes s'_{i_{n}} \otimes \cdots \otimes s''_{i_{n}} \nonumber \\
&&\otimes \cdots \otimes t_{i_{n}}) \nonumber \\
&=&(\sum_{i_{1}j_{1}} s_{i_{1}} \otimes 1 \otimes \cdots \otimes s_{j_{1}} \otimes 1 \otimes \cdots \otimes t_{i_{1}}t_{j_{1}} \nonumber \\
&&\otimes \cdots \otimes 1)...(\sum_{i_{n}j_{n}} 1 \otimes  \cdots \otimes s_{i_{n}} \otimes \cdots \otimes s_{j_{n}} \nonumber \\
&&\otimes 1 \otimes \cdots \otimes t_{i_{n}}t_{j_{n}}) \nonumber \\
&=& R_{13}R_{23} \nonumber
\end{eqnarray}
Similarly
\begin{eqnarray}
(id_{H_{1} \otimes \cdots \otimes H_{n}} \otimes \Delta)= R_{13}R_{12} \nonumber
\end{eqnarray}

For the last expression, we have
\begin{eqnarray}
R_{12}R_{13}R_{23}&=&\sum_{i_{1} \ldots i_{n}, j_{1} \ldots j_{n}, k_{1} \ldots k_{n}} s_{k_{1}}s_{j_{1}} \otimes \cdots \otimes s_{k_{n}}s_{j_{n}} \otimes t_{k_{1}}s_{i_{1}} \otimes \cdots \otimes t_{k_{n}}s_{i_{n}} \otimes t_{j_{1}}t_{i_{1}} \otimes \cdots \otimes t_{j_{n}}t_{i_{n}} \nonumber \\
 &=&(\sum_{i_{1}, j_{1}, k_{1}}s_{k_{1}}s_{j_{1}} \otimes \cdots \otimes 1 \otimes t_{k_{1}}s_{i_{1}} \otimes \cdots \otimes 1 \otimes t_{j_{1}}t_{i_{1}} \otimes \cdots \otimes 1)...(\sum_{i_{n}, j_{n}, k_{n}} 1 \otimes \cdots \nonumber \\
&& \otimes s_{k_{n}}s_{j_{n}} \otimes \cdots \otimes 1 \otimes t_{k_{n}}s_{i_{n}}\otimes \cdots \otimes t_{j_{n}}t_{i_{n}}) \nonumber \\
&=&\sum_{i_{1} \ldots i_{n}}(s_{j_{1}}s_{i_{1}} \otimes \cdots \otimes 1 \otimes s_{k_{1}}t_{k_{1}} \otimes \cdots \otimes t_{k_{1}}t_{j_{1}} \otimes \cdots \otimes 1)...(\sum_{i_{n}, j_{n}, k_{n}} 1 \otimes \cdots \otimes s_{j_{n}}s_{i_{n}} \otimes \cdots \nonumber \\
&&\otimes 1 \otimes s_{k_{n}}t_{i_{n}} \otimes \cdots \otimes t_{k_{n}}t_{j_{n}}) \nonumber \\
&=&\sum_{i_{1} \ldots i_{n}, j_{1} \ldots j_{n}, k_{1} \ldots k_{n}}s_{j_{1}}s_{i_{1}} \otimes \cdots \otimes s_{j_{n}}s_{i_{n}} \otimes \cdots \otimes s_{k_{n}}t_{i_{n}} \otimes t_{k_{1}}t_{j_{1}} \otimes \cdots \otimes t_{k_{n}}t_{j_{n}} \nonumber \\
&=& R_{23}R_{13}R_{12} \nonumber
\end{eqnarray}
\end{proof}
\ \

Let now $V_{1}, \ldots, V_{n}$ and $W_{1}, \ldots, W_{n}$ $H$-modules. We can to build a isomorphism $C^{R}_{V_{1} \ldots V_{n},W_{1} \ldots W_{n}}$ of $H_{1}, \ldots, H_{n}$ modules between $V_{1} \otimes \cdots \otimes V_{n} \otimes W_{1} \otimes \cdots \otimes W_{n}$ and $W_{1} \otimes \cdots \otimes W_{n} \otimes V_{1} \otimes \cdots \otimes V_{n}$, defined by
\begin{eqnarray}
C^{R}_{V_{1} \ldots V_{n},W_{1} \ldots W_{n}}(v_{1} \otimes \cdots \otimes v_{n} \otimes w_{1} \otimes \cdots \otimes w_{n})&=&\tau_{V_{1} \ldots V_{n},W_{1} \ldots W_{n}}(R\rhd(v_{1} \otimes \cdots \otimes v_{n} \otimes w_{1} \otimes \cdots \otimes w_{n})) \nonumber \\
&=&\sum_{i_{1} \ldots i_{n}}t_{i_{1}} \rhd w_{1} \otimes \cdots \otimes  t_{i_{n}} \rhd w_{n} \otimes  s_{i_{1}} \rhd v_{1} \otimes \cdots \otimes  s_{i_{n}} \rhd v_{n} \nonumber
\end{eqnarray}
where $\rhd$ denote the action of $H$ on $U, V$ and $W$.
\\
\begin{theorem}
For any triple $(U_{1} \otimes \cdots \otimes U_{n}, V_{1} \otimes \cdots \otimes V_{n}, W_{1} \otimes \cdots \otimes W_{n})$ of $H_{1} \otimes \cdots \otimes H_{n}$-module we have
\begin{itemize}
\item[$a)$]the map $C_{V_{1} \ldots V_{n},W_{1} \ldots W{n}}^{R}$ is an isomorphism of $H_{1} \otimes \cdots \otimes H_{n}$-module.\\
\item[$b)$]$\left(C_{V_{1}\ldots V_{n},W_{1} \ldots W_{n}}^{R} \otimes id_{U_{1} \ldots U_{n}} \right)\left(id_{V_{1} \ldots V_{n}} \otimes C_{U_{1}\ldots U_{n},W_{1} \ldots W_{n}}^{R} \right)\left(C_{U_{1}\ldots U_{n},V_{1} \ldots V_{n}}^{R} \otimes id_{W_{1} \ldots W_{n}} \right)= \\\\
    \left(id_{W_{1} \ldots W_{n}} \otimes C_{U_{1}\ldots U_{n},V_{1} \ldots V_{n}}^{R} \right)\left(C_{U_{1}\ldots U_{n},W_{1} \ldots W_{n}}^{R} \otimes id_{V_{1} \ldots V_{n}} \right)\left(id_{U_{1} \ldots U_{n}} \otimes C_{V_{1}\ldots V_{n},W_{1} \ldots W_{n}}^{R} \right)$

\end{itemize}
\end{theorem}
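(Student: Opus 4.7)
The plan is to reduce both parts of the theorem to the three identities already established in the Lemma: the intertwining relation $\Delta^{op}(x)R=R\Delta(x)$ for $x\in H_1\otimes\cdots\otimes H_n$, the two hexagon-type coproduct identities $(\Delta\otimes id)(R)=R_{13}R_{23}$ and $(id\otimes\Delta)(R)=R_{13}R_{12}$, and the Yang--Baxter equation $R_{12}R_{13}R_{23}=R_{23}R_{13}R_{12}$. Throughout, I will abbreviate $H=H_1\otimes\cdots\otimes H_n$ and suppress the multi-index notation $V=V_1\otimes\cdots\otimes V_n$, etc., since the action of $H$ on such a product is diagonal in the tensor factors.

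For part (a), I would first note that $C^R_{V,W}=\tau_{V,W}\circ(R\rhd-)$ is invertible because $R$ is invertible in $H\otimes H$ and $\tau_{V,W}$ is a bijection, with inverse $(R^{-1}\rhd-)\circ\tau_{W,V}$. The nontrivial content is $H$-linearity. For any $x\in H$ and $v\otimes w\in V\otimes W$, the action on the target is given by $\Delta(x)$ acting on $W\otimes V$, while on the source it is given by $\Delta(x)$ acting on $V\otimes W$. Applying $C^R_{V,W}$ after $x\rhd$ gives $\tau_{V,W}\bigl(R\Delta(x)(v\otimes w)\bigr)$, which by the Lemma equals $\tau_{V,W}\bigl(\Delta^{op}(x)R(v\otimes w)\bigr)$. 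Since $\tau_{V,W}$ converts $\Delta^{op}(x)$ acting on $V\otimes W$ into $\Delta(x)$ acting on $W\otimes V$, this equals $x\rhd C^R_{V,W}(v\otimes w)$, as required.

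For part (b), I will evaluate both sides of the claimed identity on a generic element $u\otimes v\otimes w\in U\otimes V\otimes W$ and show that each reduces to a single expression whose equality is the Yang--Baxter equation. The standard calculation, adapted from the single-Hopf-algebra case in Ref.\cite{Kassel2}, rewrites the left-hand side as the composition $\tau_{U\otimes V,W}\circ\tau_{U,V}^{(12)}$ of a flip with the action of a product of $R$-matrices on three slots. Using $(\Delta\otimes id)(R)=R_{13}R_{23}$ and $(id\otimes\Delta)(R)=R_{13}R_{12}$ one identifies the accumulated operator on $U\otimes V\otimes W$ with $R_{12}R_{13}R_{23}$, composed with the cyclic permutation $u\otimes v\otimes w\mapsto w\otimes v\otimes u$. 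The same manipulation applied to the right-hand side, with the roles of the two hexagon identities interchanged, gives the same cyclic permutation composed with $R_{23}R_{13}R_{12}$. The Yang--Baxter equation from the Lemma then closes the argument.

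The main obstacle I expect is not conceptual but bookkeeping: since each of $U,V,W$ is itself an $n$-fold tensor product, the operators $R_{ij}$ here live in $H^{\otimes 3n}$ rather than $H^{\otimes 3}$, and the intermediate factors of $1$ have to be tracked carefully when expanding $C^R_{V,W}\otimes id_U$, etc. The safeguard is that the Lemma already packages this bookkeeping into the $n$-fold versions of $(\Delta\otimes id)(R)$, $(id\otimes\Delta)(R)$ and the Yang--Baxter equation, so one may invoke them directly without re-deriving them in the module setting.
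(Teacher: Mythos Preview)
Your proposal is correct and follows essentially the same route as the paper: part (a) via the intertwining relation $\Delta^{op}(x)R=R\Delta(x)$ together with the flip converting $\Delta^{op}$ into $\Delta$, and part (b) by unwinding the three $C^R$'s on each side to a common permutation composed with $R_{12}R_{13}R_{23}$ versus $R_{23}R_{13}R_{12}$, then invoking the Yang--Baxter equation from the Lemma. The only cosmetic differences are that the paper writes out the full $n$-fold multi-index expansion and factors it into $n$ single-slot pieces rather than working with the packaged $R_{ij}$, and that your invocation of the hexagon identities $(\Delta\otimes id)(R)=R_{13}R_{23}$, $(id\otimes\Delta)(R)=R_{13}R_{12}$ in part (b) is actually superfluous---only the Yang--Baxter relation is needed there, as the accumulated operator emerges directly from composing the three flips and three $R$-actions.
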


\begin{proof}
\begin{itemize}
\item[$a)$] For any $x_{1} \otimes \cdots \otimes x_{n} \in$ $H_{1} \otimes \cdots \otimes H_{n}$-module by using definition \\\\
 $C_{V_{1}\ldots V_{n},W_{1} \ldots W_{n}}^{R}(x_{1} \otimes \cdots \otimes x_{n})\rhd(v_{1} \otimes \cdots \otimes v_{n} \otimes w_{1} \otimes \cdots \otimes w_{n})= \tau_{v_{1} \ldots v_{n}, w_{1} \ldots w_{n}}(R\rhd\Delta(x_{1} \otimes \cdots \otimes x_{n})\rhd(v_{1} \otimes \cdots \otimes v_{n} \otimes w_{1} \otimes \cdots \otimes w_{n}))$
\\\\
By using lemma \textbf{1} and with the notation $C_{V_{1}\ldots V_{n},W_{1} \ldots W_{n}}^{R}(x_{1} \otimes \cdots \otimes x_{n})\rhd(v_{1} \otimes \cdots \otimes v_{n} \otimes w_{1} \otimes \cdots \otimes w_{n})=\mathcal{C}$, we get

\begin{eqnarray}
\mathcal{C}&=&\tau_{v_{1} \ldots v_{n}, w_{1} \ldots w_{n}}(\Delta^{op}(x_{1} \otimes \cdots \otimes x_{n})R(v_{1} \otimes \cdots \otimes v_{n} \otimes w_{1} \otimes \cdots \otimes w_{n})) \nonumber \\
&=&\tau_{v_{1} \ldots v_{n}, w_{1} \ldots w_{n}}(\sum_{x_{1} \ldots x_{n},i_{1} \ldots i_{n}}x''_{1}s_{i_{1}}\rhd v_{1} \otimes x''_{2}s_{i_{2}}\rhd v_{2} \otimes \cdots \otimes x''_{n}s_{i_{n}}\rhd v_{n} \otimes x'_{1}t_{i_{1}}\rhd w_{1} \otimes x'_{2}t_{i_{2}}\rhd w_{2} \nonumber \\
&&\otimes \cdots \otimes x'_{n}t_{i_{n}}\rhd w_{n}) \nonumber \\
&=&\sum_{x_{1} \ldots x_{n},i_{1} \ldots i_{n}}x'_{1}t_{i_{1}}\rhd w_{1} \otimes x'_{2}t_{i_{2}}\rhd w_{2} \otimes \cdots \otimes x'_{n}t_{i_{n}}\rhd w_{n} \otimes x''_{1}s_{i_{1}}\rhd v_{1} \otimes x''_{2}s_{i_{2}}\rhd v_{2} \otimes \nonumber \\
&&\cdots \otimes x''_{n}s_{i_{n}}\rhd v_{n} \nonumber \\
&=&\Delta(x_{1} \otimes \cdots \otimes x_{n})\sum_{i_{1} \ldots i_{n}}t_{i_{1}}\rhd w_{1} \otimes \cdots \otimes t_{i_{n}}\rhd w_{n} \otimes x''_{1}s_{i_{1}}\rhd v_{1} \otimes \cdots \otimes x''_{n}s_{i_{n}}\rhd v_{n} \nonumber \\
&=&\Delta(x_{1} \otimes \cdots \otimes x_{n})\tau_{v_{1} \ldots v_{n}, w_{1} \ldots w_{n}}(R\rhd[v_{1} \otimes \cdots \otimes v_{n} \otimes w_{1} \otimes \cdots \otimes w_{n}]) \nonumber \\
&=&(x_{1} \otimes \cdots \otimes x_{n})(C_{V_{1}\ldots V_{n},W_{1} \ldots W_{n}}^{R}[v_{1} \otimes \cdots \otimes v_{n} \otimes w_{1} \otimes \cdots \otimes w_{n}]) \nonumber
\end{eqnarray}
\end{itemize}

\begin{itemize}
\item[$b)$] It is easy to verify that \\\

$\left(C_{V_{1}\ldots V_{n},W_{1} \ldots W_{n}}^{R} \otimes id_{U_{1} \ldots U_{n}} \right)\left(id_{V_{1} \ldots V_{n}} \otimes C_{U_{1}\ldots U_{n},W_{1} \ldots W_{n}}^{R} \right)\left(C_{U_{1}\ldots U_{n},V_{1} \ldots V_{n}}^{R} \otimes id_{W_{1} \ldots W_{n}} \right)=$
$\sum_{i_{1} \ldots i_{n}, j_{1} \ldots j_{n}, k_{1} \ldots k_{n}}t_{k_{1}}t_{j_{1}}\rhd w_{1} \otimes \cdots \otimes t_{k_{n}}t_{j_{n}}\rhd w_{n} \otimes s_{k_{1}}t_{i_{1}}\rhd v_{1} \cdots \otimes s_{k_{n}}t_{i_{n}}\rhd v_{n} \otimes s_{j_{1}}s_{i_{1}}\rhd u_{1} \otimes \cdots \otimes s_{j_{n}}s_{i_{n}}\rhd u_{n}$

\begin{eqnarray}
&=&\left(\sum_{i_{1}, j_{1}, k_{1}}t_{k_{1}}t_{j_{1}}\rhd w_{1} \otimes \cdots \otimes s_{k_{1}}t_{i_{1}}\rhd v_{1} \otimes \cdots \otimes s_{j_{1}}s_{i_{1}}\rhd u_{1} \otimes \cdots \otimes 1 \right)\ldots \nonumber \\
&&\left(\sum_{i_{n}, j_{n}, k_{n}}1 \otimes t_{k_{n}}t_{j_{n}}\rhd w_{n} \otimes \cdots \otimes s_{k_{n}}t_{i_{n}}\rhd v_{n} \otimes \cdots \otimes 1 \otimes s_{j_{n}}s_{i_{n}}\rhd u_{n}\right) \nonumber
\end{eqnarray}
\begin{eqnarray}
&=&\left(\sum_{i_{1}, j_{1}, k_{1}}t_{j_{1}}t_{i_{1}}\rhd w_{1} \otimes \cdots \otimes t_{k_{1}}s_{i_{1}}\rhd v_{1} \otimes \cdots \otimes s_{j_{1}}s_{i_{1}}\rhd u_{1} \otimes \cdots \otimes 1 \right)\ldots \nonumber \\
&&\left(\sum_{i_{n}, j_{n}, k_{n}}1 \otimes t_{j_{n}}t_{i_{n}}\rhd w_{n} \otimes \cdots \otimes t_{k_{n}}s_{i_{n}}\rhd v_{n} \otimes \cdots \otimes 1 \otimes s_{k_{n}}s_{j_{n}}\rhd u_{n}\right) \nonumber
\end{eqnarray}
\begin{eqnarray}
&=&\left(id_{W_{1} \ldots W_{n}} \otimes C_{U_{1}\ldots U_{n},V_{1} \ldots V_{n}}^{R}  \right)\left(C_{U_{1}\ldots U_{n},W_{1} \ldots W_{n}}^{R} \otimes id_{V_{1} \ldots V_{n}} \right)\left(id_{U_{1} \ldots U_{n}} \otimes C_{V_{1}\ldots V_{n},W_{1} \ldots W_{n}}^{R} \right) \nonumber
\end{eqnarray}
\end{itemize}
by using of lemma \textbf{1}.
\end{proof}

Note that setting $U_{1}=V_{1}=W_{1}, \ldots, U_{n}=V_{n}=W_{n}$ we conclude that $C_{V_{1}\ldots V_{n},W_{1} \ldots W_{n}}^{R}$ is a solution of the Yang-Baxter equation and therefore can be used to generate representation of braid groups.\\

Consider now $Z_{/\eta_{1}}, Z_{/\eta_{2}}, \ldots, Z_{/\eta_{n}}$ be the finite cyclic groups of order $\eta_{1}, \eta_{2}, \ldots, \eta_{n}$ and $CZ_{/\eta_{1}}, CZ_{/\eta_{2}}, \ldots, CZ_{/\eta_{n}}$ be its group algebras respectively we can to build Hopf algebras \cite{Majid2} with the quasitriangular structures
\begin{eqnarray}
R_{1}=\frac{1}{\eta_{1}}\sum_{a_{1},b_{1}=0}^{\eta_{1}-1}e^{\frac{-2\pi Ia_{1}b_{1}}{\eta_{1}}}g^{a_{1}} \otimes g^{b_{1}} \end{eqnarray}
\begin{eqnarray}
R_{2}=\frac{1}{\eta_{2}}\sum_{a_{2},b_{2}=0}^{\eta_{2}-1}e^{\frac{-2\pi Ia_{2}b_{2}}{\eta_{2}}}g^{a_{2}} \otimes g^{b_{2}}
\end{eqnarray}
\begin{eqnarray}
\vdots \nonumber
\end{eqnarray}
\begin{eqnarray}
R_{n}=\frac{1}{\eta_{n}}\sum_{a_{n},b_{n}=0}^{\eta_{n}-1}e^{\frac{-2\pi Ia_{n}b_{n}}{\eta_{n}}}g^{a_{n}} \otimes g^{b_{n}} \end{eqnarray}
\\
and has coproduct $\Delta g^{a_{1}}=g^{a_{1}} \otimes g^{a_{1}}, \Delta g^{a_{2}}=g^{a_{2}} \otimes g^{a_{2}},\ldots, \Delta g^{a_{n}}=g^{a_{n}} \otimes g^{a_{n}}$. The counit is given by $\epsilon g^{a_{1}}=\epsilon g^{a_{2}}=\cdots=\epsilon g^{a_{n}}=1$ and the antipode $Sg^{a_{1}}=(g^{a_{1}})^{-1}, Sg^{a_{2}}=(g^{a_{2}})^{-1},\ldots,Sg^{a_{n}}=(g^{a_{n}})^{-1}$.\\

According to our formulation, we have\\
\begin{equation}
R=\frac{1}{\eta_{1}\eta_{2}\ldots\eta_{n}}\sum_{a_{1}, a_{2},\ldots,a_{n};b_{1}, b_{2},\ldots,b_{n}=0}^{\eta_{1}-1,\eta_{2}-1,\ldots,\eta_{n}-1}e^{\frac{-2\pi Ia_{1}b_{1}\ldots a_{n}b_{n}}{\eta_{1}\eta_{2}\ldots\eta_{n}}}g^{a_{1}} \otimes g^{a_{2}} \otimes \cdots \otimes g^{a_{n}} \otimes g^{b_{1}} \otimes g^{b_{2}} \otimes \cdots \otimes g^{b_{n}}
\end{equation}
\\
Using the notation $C_{U_{1} \ldots U_{n}, V_{1} \ldots V_{n}}^{R}(u_{1} \otimes u_{2} \otimes \cdots \otimes u_{n} \otimes v_{1} \otimes v_{2} \otimes \cdots \otimes v_{n})=\mathcal{C}_{1}$, we can show that\\
\begin{eqnarray}
\mathcal{C}_{1}&=&\frac{1}{\eta_{1}\eta_{2}\ldots\eta_{n}}\sum_{a_{1}, a_{2},\ldots,a_{n} b_{1}, b_{2}, \ldots,b_{n}}^{\eta_{1}-1,\eta_{2}-1,\ldots,\eta_{n}-1}e^{\frac{-2\pi Ia_{1}b_{1}a_{2}b_{2}\ldots a_{n}b_{n}}{\eta_{1}\eta_{2}\ldots\eta_{n}}}g^{b_{1}}\rhd v_{1} \otimes g^{b_{2}}\rhd v_{2} \otimes \cdots \otimes g^{b_{n}}\rhd v_{n} \nonumber \\
&&\otimes g^{a_{1}}\rhd u_{1} \otimes g^{a_{2}}\rhd u_{2} \otimes \cdots \otimes g^{a_{n}}\rhd u_{n}
\end{eqnarray}
\newpage

\section{Applications}
\label{sec:ff}
In order to illustrate our formalism, we consider a simple case $CZ_{/2}$ with group $G=\{\epsilon,x\}$, where $\epsilon$ is the identity. In this case

\begin{eqnarray}
R&=&\frac{1}{2}\sum_{a,b=0}^{1}e^{-\pi Iab}g^{a} \otimes g^{b} \nonumber \\
&=& \frac{1}{2}(\epsilon \otimes \epsilon + x \otimes \epsilon + \epsilon \otimes x - x \otimes x)
\end{eqnarray}

Using the regular representation $\Gamma$ of the algebra, we have
\begin{eqnarray}
\Gamma(\epsilon \otimes \epsilon)= \left(\begin{array}{cccc}
    1 & 0 & 0 & 0 \\
    0 & 1 & 0 & 0 \\
    0 & 0 & 1 & 0 \\
    0 & 0 & 0 & 1 \\
  \end{array}
  \right);
\Gamma(x \otimes \epsilon)= \left(\begin{array}{cccc}
    0 & 0 & 1 & 0 \\
    0 & 0 & 0 & 1 \\
    1 & 0 & 0 & 0 \\
    0 & 1 & 0 & 0 \\
  \end{array}
  \right) \nonumber
 \end{eqnarray}

 \begin{eqnarray}
\Gamma(\epsilon \otimes x)= \left(\begin{array}{cccc}
    0 & 1 & 0 & 0 \\
    1 & 0 & 0 & 0 \\
    0 & 0 & 0 & 1 \\
    0 & 0 & 1 & 0 \\
  \end{array}
  \right);
\Gamma(x \otimes x)= \left(\begin{array}{cccc}
    0 & 0 & 0 & 1 \\
    0 & 0 & 1 & 0 \\
    0 & 1 & 0 & 0 \\
    1 & 0 & 0 & 0 \\
  \end{array}
  \right) \nonumber
\end{eqnarray}

resulting in
\begin{equation}
\Gamma(R)= \frac{1}{2}\left(\begin{array}{cccc}
    1 & 1 & 1 & -1 \\
    1 & 1 & -1 & 1 \\
    1 & -1 & 1 & 1 \\
    -1 & 1 & 1 & 1 \\
\end{array}
\right)
\end{equation}

We next consider $\Gamma(R)\equiv R$. The associated flip operator is given by

\begin{equation}
\tau= \left(\begin{array}{cccc}
    1 & 0 & 0 & 0 \\
    0 & 0 & 1 & 0 \\
    0 & 1 & 0 & 0 \\
    0 & 0 & 0 & 1 \\
  \end{array}
  \right)
\end{equation}

Consequently,
\begin{equation}
R'= \tau R=\frac{1}{2}\left(\begin{array}{cccc}
    1 & 1 & 1 & -1 \\
    1 & -1 & 1 & 1 \\
    1 & 1 & -1 & 1 \\
    -1 & 1 & 1 & 1 \\
  \end{array}
  \right)
\end{equation}

This matrix satisfies the braid relation
\begin{equation}
(\mathbb{I} \otimes R')(R' \otimes \mathbb{I})(\mathbb{I} \otimes R')=(R' \otimes \mathbb{I})(\mathbb{I} \otimes R')(R' \otimes \mathbb{I})
\end{equation}
where $\mathbb{I}$ is a identity matrix $2 \times 2$, and it can be visualized as quantum logic gate.

Its action on the Bell states is given by
\begin{equation}
R'\left[\frac{1}{\sqrt{2}}(\left\vert 00\right\rangle+\left\vert 11\right\rangle)\right]=\frac{1}{\sqrt{2}}(\left\vert 01\right\rangle+\left\vert 10\right\rangle)= \left\vert \Psi^{+}\right\rangle
\end{equation}

\begin{equation}
R'\left[\frac{1}{\sqrt{2}}(\left\vert 01\right\rangle+\left\vert 10\right\rangle)\right]=\frac{1}{\sqrt{2}}(\left\vert 00\right\rangle+\left\vert 11\right\rangle)=\left\vert \Phi^{+}\right\rangle
\end{equation}

\begin{equation}
R'\left[\frac{1}{\sqrt{2}}(\left\vert 00\right\rangle-\left\vert 11\right\rangle)\right]=\frac{1}{\sqrt{2}}(\left\vert 00\right\rangle-\left\vert 11\right\rangle)=\left\vert \Phi^{-}\right\rangle
\end{equation}

\begin{equation}
R'\left[\frac{1}{\sqrt{2}}(\left\vert 01\right\rangle-\left\vert 10\right\rangle)\right]=\frac{1}{\sqrt{2}}(\left\vert 10\right\rangle-\left\vert 01\right\rangle)=-\left\vert \Psi^{-}\right\rangle
\end{equation}\\

i.e. the entanglement is preserved under the action of this gate. Importantly, the symmetry groups in 3D are cyclic as abstract group. Therefore, the cyclic groups may indirectly reflect symmetries of physical systems transforming maximally entangled states in themselves. Interestingly, starting from on extremely simple case, it is possible to generate a nontrivial structure.\\

In a seminal work that established a connection between quantum entanglement and topological entanglement, Kauffman and Lomanaco Jr.\cite{Kauffman3} introduced the following matrix solution to the Yang-Baxter equation
\begin{eqnarray*}
R = \left(\begin{array}{cccc}
    a & 0 & 0 & 0 \\
    0 & 0 & d & 0 \\
    0 & c & 0 & 0  \\
    0 & 0 & 0 & b \\
\end{array}
\right)
\end{eqnarray*}
where a, b, c and d are any scalars on the unit circle in the complex plane. It was show that if $R$ is chosen so that $ab\neq cd$, then the state $R(\psi \otimes \psi)$, with $\psi=\left\vert 0\right\rangle + \left\vert 1\right\rangle$, is entangled.\\

All the 4x4 unitary matrix solutions to the braided Yang-Baxter equation was obtained by Dye \cite{Dye1}. For this dimension, the relationship between quantum entanglement and topological entanglement was analyzed and families of solutions have been classified. A solution very explored by Zhang \cite{Zhang3} is given by the matrix:

  \begin{eqnarray*}
B = \frac{1}{2}\left(\begin{array}{cccc}
    1 & 0 & 0 & 0 \\
    0 & 1 & -1 & 0 \\
    0 & 1 & 1 & 0  \\
    -1 & 0 & 0 & 1 \\
\end{array}
\right)
\end{eqnarray*}
called Bell matrix. The action of this matrix on the basis state results in Bell states:

\begin{eqnarray*}
B\left\vert 00\right\rangle & = & \frac{1}{\sqrt{2}}(\left\vert 00\right\rangle-\left\vert 11\right\rangle)= \left\vert \Phi^{-}\right\rangle  \\
B\left\vert 01\right\rangle & = & \frac{1}{\sqrt{2}}(\left\vert 01\right\rangle+\left\vert 10\right\rangle)= \left\vert \Psi^{+}\right\rangle   \\
B\left\vert 10\right\rangle & = & -\frac{1}{\sqrt{2}}(\left\vert 01\right\rangle+\left\vert 10\right\rangle)=- \left\vert \Psi^{-}\right\rangle   \\
B\left\vert 11\right\rangle & = & \frac{1}{\sqrt{2}}(\left\vert 00\right\rangle+\left\vert 11\right\rangle)=\left\vert \Phi^{+}\right\rangle   \\
\end{eqnarray*}

In this context, equations for teleportation and their diagrammatical representations were presented. Here, unlike the mentioned above approaches, our purpose is to obtain representations of braid groups in a systematic way for arbitrary dimensions, exploiting the structure of underlying symmetries. This is interesting because anyonic physics, for example, Monchon \cite{monchon} showed that for anyons obtained from a finite gauge theory, the computational power depends on the symmetry group. Besides that, for cyclic groups, according to recent investigations about anyons in integer quantum Hall magnets \cite{muniz}, the nontrivial fundamental homotopy group $\pi_{1}(O(3))=Z_{2}$ guarantees the existence of the $Z_{2}$ vortices. It is noteworthy that although our analysis have been done for cyclic groups, any group or algebra could have been used, since our method is general.

\section{Conclusions}

The main objective of this work has been to present a systematic method to derive representations of braid groups through a set of quasitriangular Hopf algebras. This approach should be related to the topological quantum computation. In reference \cite{vidal}, possible experimental implementations of lattice models based on non-Abelian discrete symmetry groups has been proposed. The group elements can be viewed as transformations between the states of the sites of an superconducting Josephson-junction array. The algebraic structure employed was the dihedral group that can be expressed in terms of cyclic groups using the semidirect product. In this paper, we generalize some results obtained in reference \cite{Kassel2} and we explore the structure of a quasitriangular Hopf algebra derived from a cyclic group. In particular, we show how to obtain a quantum logic gate of a simple abelian structure generated by $CZ_{/2}$ group algebra. This gate becomes entangled states in themselves. Furthermore, we compared our method with some related works, highlighting differences and possible advantages if symmetry considerations are required. As perspectives, it seems interesting to investigate other cyclic groups, as well as, possible associated topological structure.

\end{document}